\let\newmathbb\mathbb
    \let\mathbb\relax
    \newcommand{\mathbb}[1]{\bm{\newmathbb{#1}}}
\crefname{theorem}{Theorem}{Theorems}
\crefname{proposition}{Proposition}{Propositions}
\crefname{lemma}{Lemma}{Lemmas}
\crefname{claim}{Claim}{Claims}
\crefname{corollary}{Corollary}{Corollaries}
\crefname{observation}{Observation}{Observations}
\crefname{remark}{Remark}{Remarks}
\crefname{example}{Example}{Examples}
\crefname{hypothesis}{Hypothesis}{Hypotheses}
\crefname{definition}{Definition}{Definitions}
\crefname{problem}{Problem}{Problems}
\crefname{section}{Section}{Sections}
\crefname{appendix}{Appendix}{Appendices}
\crefname{equation}{Eq.}{Eqs.}
\crefname{table}{Table}{Tables}
\crefname{figure}{Figure}{Figures}
\renewcommand{\geq}{\geqslant}
\renewcommand{\leq}{\leqslant}
\renewcommand{\phi}{\varphi}
\renewcommand{\epsilon}{\varepsilon}
\renewcommand{\tilde}{\widetilde}
\newcommand{\prb}[1]{\textsf{#1}\xspace}
\newcommand{\reco}{\leftrightsquigarrow}
\DeclareMathOperator{\bigO}{\mathrm{O}}
\DeclareMathOperator{\opt}{\mathsf{opt}}
\DeclareMathOperator{\proj}{\Pi}
\let\polylog\relax\DeclareMathOperator{\polylog}{\mathrm{polylog}}
\newcommand{\ttt}{t}
\newcommand{\TTT}{T}
\algnewcommand{\algorithmicand}{\textbf{ and }}
\algnewcommand{\algorithmicor}{\textbf{ or }}
\algnewcommand{\algorithmicto}{\textbf{ to }}
\algrenewcommand\textproc{\textsl}
\newcommand{\calA}{\mathcal{A}}
\newcommand{\bbN}{\mathbb{N}}
\newcommand{\scrR}{\mathscr{R}}
\newcommand{\scrS}{\mathscr{S}}
\newcommand{\convexpath}[2]{
[   
    create hullnodes/.code={
        \global\edef\namelist{#1}
        \foreach [count=\counter] \nodename in \namelist {
            \global\edef\numberofnodes{\counter}
            \node at (\nodename) [draw=none,name=hullnode\counter] {};
        }
        \node at (hullnode\numberofnodes) [name=hullnode0,draw=none] {};
        \pgfmathtruncatemacro\lastnumber{\numberofnodes+1}
        \node at (hullnode1) [name=hullnode\lastnumber,draw=none] {};
    },
    create hullnodes
]
($(hullnode1)!#2!-90:(hullnode0)$)
\foreach [
    evaluate=\currentnode as \previousnode using \currentnode-1,
    evaluate=\currentnode as \nextnode using \currentnode+1
    ] \currentnode in {1,...,\numberofnodes} {
-- ($(hullnode\currentnode)!#2!-90:(hullnode\previousnode)$)
  let \p1 = ($(hullnode\currentnode)!#2!-90:(hullnode\previousnode) - (hullnode\currentnode)$),
    \n1 = {atan2(\y1,\x1)},
    \p2 = ($(hullnode\currentnode)!#2!90:(hullnode\nextnode) - (hullnode\currentnode)$),
    \n2 = {atan2(\y2,\x2)},
    \n{delta} = {-Mod(\n1-\n2,360)}
  in 
    {arc [start angle=\n1, delta angle=\n{delta}, radius=#2]}
}
-- cycle
}
\newtheorem{theorem}{Theorem}[section]
\newtheorem{proposition}[theorem]{Proposition}
\newtheorem{lemma}[theorem]{Lemma}
\newtheorem{claim}[theorem]{Claim}
\newtheorem{observation}[theorem]{Observation}
\theoremstyle{definition}
\newtheorem{definition}[theorem]{Definition}
\numberwithin{equation}{section}
\title{Tight Inapproximability of Target Set Reconfiguration}
\author{
Naoto Ohsaka \\
\small{CyberAgent, Inc., Japan} \\
\small{\href{mailto:naoto.ohsaka@gmail.com}{\texttt{ohsaka\_naoto@cyberagent.co.jp}}}
}
\date{}
\begin{document}
\maketitle
\begin{abstract}Given a graph $G$ with a vertex threshold function $\tau$,
consider a dynamic process in which any inactive vertex $v$ becomes activated whenever
at least $\tau(v)$ of its neighbors have been activated.
A vertex set $S$ is called a \emph{target set} if
all vertices of $G$ would eventually be activated when initially activating exactly the vertices of $S$.
In the \prb{Minmax Target Set Reconfiguration} problem,
for a graph $G$ and a pair of its target sets $X$ and $Y$,
we wish to transform $X$ into $Y$ by repeatedly adding or removing a single vertex,
using only target sets of $G$,
so as to minimize the \emph{maximum size} of any intermediate target set.
We prove that 
it is \NP-hard to approximate
\prb{Minmax Target Set Reconfiguration}
within a factor of $2-o\left(\frac{1}{\operatorname{polylog} n}\right)$,
where $n$ is the number of vertices.
Our result establishes a tight lower bound on approximability of \prb{Minmax Target Set Reconfiguration},
which admits a simple $2$-factor approximation algorithm.
The proof is based on a gap-preserving reduction from
\prb{Target Set Selection} to
\prb{Minmax Target Set Reconfiguration},
where \NP-hardness of approximation for the former problem is proven by
{Chen} (SIDMA 2009)~\cite{chen2009approximability} and
{Charikar, Naamad, and Wirth} (APPROX/RANDOM 2016)~\cite{charikar2016approximating}.
\end{abstract}
\section{Introduction}

\paragraph{Background.}
Given a graph $G = (V,E,\tau)$, where $\tau \colon V \to \bbN$ is a vertex threshold function,\footnote{
Throughout this paper, we assume that a vertex threshold function is associated with every graph.
}
consider a dynamic process in which
any inactive vertex $v$ becomes activated whenever at least $\tau(v)$ of its neighbors have been activated.
Formally, the \emph{activation process} over $G$, which proceeds in discrete-time steps, is defined as follows.
Each vertex takes either of two states, \emph{active} or \emph{inactive}.
For a vertex set $S \subseteq V$,
let $\calA^{(i)}(S)$ denote the set of already activated vertices at step $i$.
Initially, the vertices of $S$ are activated and the others remain inactive;
namely, $\calA^{(0)}(S) \triangleq S$.
At step $i-1$,
we examine whether each inactive vertex $v$ has at least $\tau(v)$ active neighbors.
If this is the case, then it will be active at step $i$.
Thus, $\calA^{(i)}(S)$ for each $i \geq 1$ is recursively defined as follows:
\begin{align}
    \calA^{(i)}(S) \triangleq
    \calA^{(i-1)}(S) \cup \Bigl\{ v \in V \Bigm| \bigl|N(v) \cap \calA^{(i-1)}(S)\bigr| \geq \tau(v) \Bigr\},
\end{align}
where $N(v)$ is the set of $v$'s neighbors.
This process is \emph{irreversible}; i.e.,
an active vertex may not become inactive.
We define the \emph{active vertex set} of $S$ as
$\calA(S) \triangleq \lim_{i \to \infty} \calA^{(i)}(S)$, 
which is in fact equal to $\calA^{(|V|)}(S)$.
We say that $S$ \emph{activates} a vertex $v$ if $v \in \calA(S)$.
A \emph{target set} for $G$ is defined as a vertex set $S \subseteq V$ such that
all vertices of $G$ would eventually be activated when initially activating exactly the vertices of $S$;
namely, $\calA(S) = V$.

The \prb{Target Set Selection} problem \cite{kempe2003maximizing,kempe2015maximizing,chen2009approximability} asks to identify the minimum target set for $G$.

\begin{itembox}[l]{\prb{Target Set Selection}}
\begin{tabular}{ll}
    \textbf{Input:}
    & a graph $G = (V,E,\tau)$.
    \\
    \textbf{Output:}
    & a minimum target set for $G$.
\end{tabular}
\end{itembox}
We will use $\opt(G)$ to denote the minimum size of any target set for $G$.
Since the activation process models the spread of influence, information, and opinion over a network,
\prb{Target Set Selection} finds applications in social network analysis 
\cite{chen2009approximability,kempe2015maximizing,kempe2003maximizing} and
distributed computing \cite{peleg2002local,peleg1998size}, and
is known by various names such as
\emph{irreversible $k$-conversion sets} \cite{dreyer2009irreversible,centeno2011irreversible} and
\emph{dynamic monopolies} \cite{peleg1998size,peleg2002local}.
This problem generalizes
\prb{Minimum Vertex Cover} and \prb{Minimum Feedback Vertex Set} in that
a target set is a vertex cover (resp.~a feedback vertex set)
if $\tau(v)$ is equal to the degree of $v$ (resp.~the degree of $v$ minus $1$) for each vertex $v$ \cite{dreyer2000applications,dreyer2009irreversible}.

On the hardness side,
\prb{Target Set Selection} is known to be \NP-hard \cite{peleg2002local};
indeed, it is \NP-hard to approximate the minimum target set within a factor
of $2^{\log^{1-\epsilon} n}$ \cite{chen2009approximability} and
a factor of $n^{\frac{1}{2}-\epsilon}$ under the planted dense subgraph conjecture \cite{charikar2016approximating} for any $\epsilon > 0$,
where $n$ is the number of vertices.
On the algorithmic side,
efficient algorithms have been developed in
the parameterized (approximation) regime
\cite{ben-zwi2011treewidth,chu2023fpt,dvorak2022target,nichterlein2013tractable,bazgan2014parameterized,suzuki2025parameterized},
the restricted graph classes
\cite{chiang2013target,chiang2013some,dvorak2024complexity,feige2021target,centeno2011irreversible,penso2014p3,kyncl2017irreversible,ueno1988nonseparating,takaoka2015note,nichterlein2013tractable},
the bounded threshold case \cite{bliznets2023solving}, and
dynamic environments \cite{deligkas2024being,schierreich2023maximizing,ohsaka2016maximizing}.

We study a reconfiguration analogue of \prb{Target Set Selection}.
In \emph{reconfiguration problems} \cite{ito2011complexity},
we would like to determine the connectivity between feasible solutions for a combinatorial problem.
For a graph $G = (V,E,\tau)$ and a pair of its target sets $X$ and $Y$ of size $k$,
a \emph{reconfiguration sequence from $X$ to $Y$}
is defined as any sequence,
denoted by $\scrS = ( S^{(1)}, \ldots, S^{(\TTT)} ) $,
such that
$S^{(1)} = X$,
$S^{(\TTT)} = Y$, and
$S^{(\ttt)}$ for each $2 \leq \ttt \leq \TTT$ is a target set
obtained from $S^{(\ttt-1)}$ by adding or removing a single vertex
(i.e., $|S^{(\ttt-1)} \triangle S^{(\ttt)}| = 1$).\footnote{
Such a model of reconfiguration is called \emph{token addition and removal} \cite{ito2011complexity}.
}
The \prb{Target Set Reconfiguration} problem \cite{ohsaka2023reconfigurability}
requests to decide if
there exists a reconfiguration sequence from $X$ to $Y$
consisting only of target sets of size at most $k+1$.
This problem is shown to be \PSPACE-complete, as
it includes a \PSPACE-complete \prb{Vertex Cover Reconfiguration} problem 
\cite{hearn2005pspace,ito2011complexity,kaminski2012complexity}
as a special case.
Our prior work \cite{ohsaka2023reconfigurability}
investigates 
the dividing line between polynomial-time solvability and \PSPACE-completeness
of \prb{Target Set Reconfiguration}
in the restricted case; e.g.,
it is solvable in polynomial time on trees, whereas
\PSPACE-complete even on bipartite planar graphs and split graphs.
We refer the readers to the surveys  \cite{nishimura2018introduction,heuvel2013complexity,mynhardt2019reconfiguration,bousquet2024survey}
as well as
the Combinatorial Reconfiguration wiki \cite{hoang2024combinatorial}
for algorithmic and hardness results of reconfiguration problems.

Our focus in this study is on \emph{approximate reconfigurability} 
\cite{ohsaka2023gap,ohsaka2024gap,ohsaka2024alphabet,ohsaka2025approximate,ohsaka2025yet}
of \prb{Target Set Reconfiguration},
which affords to use any large target set but requires to minimize the \emph{maximum size} of any intermediate target set.
For any reconfiguration sequence $\scrS = ( S^{(1)}, \ldots, S^{(\TTT)} )$,
its \emph{size} $\|\scrS\|$ is defined as
the maximum size of $S^{(\ttt)}$ in $\scrS$; namely,
\begin{align}
    \|\scrS\| \triangleq \max_{1 \leq \ttt \leq \TTT} |S^{(\ttt)}|.
\end{align}
In \prb{Minmax Target Set Reconfiguration} --- a canonical approximate version of \prb{Target Set Reconfiguration} --- we wish to find a reconfiguration sequence $\scrS$ 
from $X$ to $Y$
such that $\|\scrS\|$ is minimized.

\begin{itembox}[l]{\prb{Minmax Target Set Selection}}
\begin{tabular}{ll}
    \textbf{Input:}
    & a graph $G = (V,E,\tau)$ and
    a pair of its target sets $X$ and $Y$ of size $k$.
    \\
    \textbf{Output:}
    & a reconfiguration sequence $\scrS$ from $X$ to $Y$
    such that $\|\scrS\|$ is minimized.
\end{tabular}
\end{itembox}

\noindent
We will use $\opt_G(X \reco Y)$ to denote
the minimum value of $\|\scrS\|$ over
all possible reconfiguration sequences $\scrS$ from $X$ to $Y$; namely,
\begin{align}
    \opt_G(X \reco Y) \triangleq \min_{\scrS = ( X, \ldots, Y )} \|\scrS\|.
\end{align}
Obviously, we can build
a $2$-approximation reconfiguration sequence $\scrS$ such that 
$\|\scrS\| \leq |X \cup Y| \leq 2k$
in polynomial time by the following simple procedure, e.g., \cite[Theorem~6]{ito2011complexity}.
\begin{itembox}[l]{$2$-approximation reconfiguration sequence for \prb{Minmax Target Set Reconfiguration}}
\begin{algorithmic}[1]
    \LComment{start with $X$.}
    \State add vertices of $Y \setminus X$ one by one.
    \LComment{obtain $X \cup Y$.}
    \State remove vertices of $X \setminus Y$ one by one.
    \LComment{end with $Y$.}
\end{algorithmic}
\end{itembox}

\noindent
Of particular interest is thus whether $(2-\epsilon)$-approximation is possible (in polynomial time).

\paragraph{Our Result.}
We prove that 
\prb{Minmax Target Set Reconfiguration} cannot be approximated within a factor better than $2$
unless \cP~$=$~\NP,
formally stated below.

\begin{theorem}\label{thm:main}
    \prb{Minmax Target Set Reconfiguration}
    is \NP-hard to approximate within a factor of $2-o\left(\frac{1}{\polylog n}\right)$,
    where $n$ is the number of vertices.
\end{theorem}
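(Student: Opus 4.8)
The plan is a gap-preserving reduction from the gap version of \prb{Target Set Selection}. By the cited hardness \cite{chen2009approximability,charikar2016approximating}, it is \NP-hard, given a graph $H$ on $m$ vertices and an integer $s$, to distinguish $\opt(H)\leq s$ from $\opt(H)>s\cdot g$, where $g\triangleq 2^{\log^{1-\epsilon}m}$. From such $H$ I would build a \prb{Minmax Target Set Reconfiguration} instance $(G,X,Y)$ with $|X|=|Y|=k$ for $k\triangleq\lfloor sg/2\rfloor$, whose endpoints are two \emph{disjoint} payloads $A=\{x_1,\dots,x_k\}$ and $B=\{y_1,\dots,y_k\}$, and set $X\triangleq A$, $Y\triangleq B$. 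The core of $G$ is a verbatim copy of $H$ (same edges and thresholds), so the minimum size of a set activating that region equals $\opt(H)$. The whole construction is guided by a trichotomy: on the sufficiency side, each of (i) $A\subseteq S$, (ii) $B\subseteq S$, and (iii) $S\cap V(H)$ being a target set of $H$ should \emph{force} $\calA(S)=V(G)$; on the necessity side, the correctness rests on what I will call the \emph{size invariant}: in the NO case, every target set $S$ satisfies $A\subseteq S$, or $B\subseteq S$, or $|S|\geq 2k$.

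The hard part---essentially the whole construction---is realizing this trichotomy with \emph{no cheap shortcut target sets}, and this is where I expect the main obstacle to lie. I would force-activate each payload vertex only from a full mode, never from a single summary vertex: $x_i$ switches on precisely when the embedded copy of $H$ is \emph{completely} active or when \emph{all} of $B$ is active, implemented by attaching to $x_i$ many parallel threshold gadgets, each firing only on an \emph{entire} payload (or on the \emph{fully} activated $H$), so that no single injected vertex can counterfeit a mode. To let modes (i)/(ii) fill the core while keeping the $H$-test faithful, I attach to each core vertex $v$ private ``booster'' neighbors---enough to meet $v$'s threshold once a whole payload is present---and I route the signal ``$H$ is fully active'' to the payload through standard one-way/implication gadgets for threshold dynamics, arranged so that neither the payload nor any intermediate vertex feeds back into the core. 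The intended effect is that a seed confined to the core activates the core only if it is a genuine target set of $H$, and that faking a mode costs at least $2k$ injected vertices. A fixed-point analysis should then yield the size invariant: if $S$ misses some $x_{i_0}$ and some $y_{j_0}$ while $S\cap V(H)$ is not a target set of $H$, the mutual dependencies leave $x_{i_0}$ permanently inactive unless $S$ already supplies a full mode's worth ($\geq 2k$) of activation. Pinning down this invariant---simulating the needed one-way implications and the ``OR of modes'' in the monotone threshold model without ever creating a small target set---is the step I expect to be genuinely delicate.

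Granting the trichotomy, the two directions are routine. For completeness, in the YES case fix a target set $Z$ of $H$ with $|Z|\leq s$; then the sequence $A\reco A\cup Z\reco Z\reco Z\cup B\reco B$ (add $Z$, delete $A$, add $B$, delete $Z$, one vertex at a time) stays within target sets by modes (i), (iii), (iii), and (ii) respectively, so $\opt_G(X\reco Y)\leq k+s$. For soundness, along any reconfiguration $\scrS=(S^{(1)},\dots,S^{(\TTT)})$ with $S^{(1)}=A$ and $S^{(\TTT)}=B$, take the first index with $A\not\subseteq S^{(t+1)}$; the step $t\to t+1$ deletes a vertex of $A$, so $A\subseteq S^{(t)}$. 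By the size invariant, either $|S^{(t+1)}|\geq 2k$, or $B\subseteq S^{(t+1)}$; in the latter case $B\subseteq S^{(t)}$ as well (only an $A$-vertex was removed), whence $|S^{(t)}|\geq|A|+|B|=2k$. Either way $\|\scrS\|\geq 2k$, so $\opt_G(X\reco Y)\geq 2k$.

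Finally I would check the numbers. The gap ratio is at least $\frac{2k}{k+s}\geq\frac{sg-1}{sg/2+s}=\frac{2g}{g+2}\bigl(1-o(1)\bigr)=2-\Theta(1/g)$. Since every NO instance has $sg<\opt(H)\leq m$, the construction adds only polynomially many reader, booster, and routing vertices, so $n\triangleq|V(G)|=\mathrm{poly}(m)$ and hence $\log n=\Theta(\log m)$; therefore $1/g=2^{-\Theta(\log^{1-\epsilon}n)}=o\!\left(\frac{1}{\polylog n}\right)$. Thus it is \NP-hard to distinguish $\opt_G(X\reco Y)\leq k+s$ from $\opt_G(X\reco Y)\geq 2k$, which gives the claimed inapproximability factor $2-o\!\left(\frac{1}{\polylog n}\right)$. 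The size bookkeeping and the reconfiguration arguments above are straightforward; the single real difficulty is the gadget analysis establishing the size invariant.
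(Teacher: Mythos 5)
Your overall strategy coincides with the paper's: a gap-preserving reduction from gap \prb{Target Set Selection} in which the two endpoints are disjoint ``payload'' sets of size roughly $sg/2$, a set is a target set precisely when it contains a full payload or restricts to a target set of the source graph, and any reconfiguration sequence must therefore pass through a set of size about $2k$. Your completeness sequence, your soundness argument from the size invariant, and your parameter bookkeeping are all essentially correct and match the paper's (the paper phrases soundness via a three-way classification of the sets in the sequence and extracts a set containing both a payload and an induced target set, but the resulting bound is the same).

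The genuine gap is that the trichotomy/size invariant---which you correctly identify as the entire difficulty---is asserted rather than proven, and the construction you sketch for it would not obviously work. Making $x_i$ fire on ``all of the source graph active \emph{or} all of $B$ active'' requires an OR of two AND-conditions in a monotone threshold model; the natural implementation (low-threshold payload vertices fed by intermediate AND-vertices) creates exactly the cheap counterfeit target sets you warn against: seeding one intermediate per payload vertex yields a target set of size $k$ containing neither payload and inducing no target set of the source graph, which falsifies the size invariant and breaks your soundness step. The paper avoids this by never implementing the OR locally: in its notation, each payload vertex $x_i$ has threshold $|V(G)|$ and is fed (through one-way gadgets) only from the core, and the ``other payload'' mode is routed \emph{through the core} as $Y \to B \to V \to X$, where each booster $b_{v,j}$ has threshold $\ell = |Y|$ so that only the entire payload can trigger it, and each one-way gadget contributes at most one unit toward its head's threshold. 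A second missing piece is that your invariant must hold for \emph{arbitrary} target sets, including ones containing gadget-internal or booster vertices; the paper needs two projection lemmas (aggregate gadget internals onto their tails, push boosters onto their core destinations) to reduce every optimal sequence to subsets of $V \uplus X \uplus Y$ before the characterization can be applied. Without a concrete gadget graph and these two reductions, the central claim on which everything else rests is unproven.
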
\noindent
Since \prb{Minmax Target Set Reconfiguration} admits a $2$-factor approximation algorithm as shown above,
\cref{thm:main} establishes a tight lower bound on approximability of \prb{Minmax Target Set Reconfiguration}.
The proof of \cref{thm:main} is based on a gap-preserving reduction from
\prb{Target Set Selection} to
\prb{Minmax Target Set Reconfiguration},
where \NP-hardness of approximation for the former problem is proven 
by \citet{chen2009approximability,charikar2016approximating}.

\paragraph{Additional Related Work.}
Other reconfiguration problems whose inapproximability was studied include
\prb{$k$-SAT Reconfiguration} \cite{ito2011complexity,hirahara2025asymptoticallya},
\prb{$k$-Coloring Reconfiguration} \cite{hirahara2025asymptotically},
\prb{Clique Reconfiguration} \cite{ito2011complexity,hirahara2024probabilistically,hoang2026inapproximability},
\prb{2-CSP Reconfiguration} \cite{ohsaka2024gap,karthikc.s.2023inapproximability,hirahara2026optimal},
\prb{Set Cover Reconfiguration} \cite{ohsaka2024gap,karthikc.s.2023inapproximability,hirahara2024optimal}, and
\prb{Submodular Reconfiguration} \cite{ohsaka2022reconfiguration}.
Among them,
\citet{karthikc.s.2023inapproximability} proved that
\prb{2-CSP Reconfiguration} is \NP-hard to approximate within a factor of
$\frac{1}{2}+\epsilon$ for any $\epsilon \in (0,1)$, which is tight as
it is $\left(\frac{1}{2}-\epsilon\right)$-factor approximable \cite{karthikc.s.2023inapproximability}.
Note that these \NP-hardness results rely on
a gap-preserving reduction from an \NP-hard \emph{source problem} to its reconfiguration analogue.
Our study follows the same idea,
deriving a tight inapproximability result for \prb{Minmax Target Set Reconfiguration}.

\section{Proof of \texorpdfstring{\cref{thm:main}}{Theorem 1.1}}
Our proof of \cref{thm:main} reduces \prb{Target Set Selection} to
\prb{Minmax Target Set Reconfiguration}.
Specifically, we derive the following gap-preserving reducibility.

\begin{proposition}
\label{prp:reduct}
    For any positive integers $k_c,k_s,\ell \in \bbN$ such that $\ell > k_s \geq k_c$,
    there is a polynomial-time algorithm that 
    takes an instance $G$ of \prb{Target Set Selection} and
    produces an instance $(H;X,Y)$ of \prb{Minmax Target Set Reconfiguration},
    where $X$ and $Y$ are target sets of size $\ell$,
    such that the following hold\textup{:}
    \begin{itemize}
        \item \textup{(}Completeness\textup{)}
        If $\opt(G) \leq k_c$, then $\opt_H(X \reco Y) \leq k_c+\ell$.
        \item \textup{(}Soundness\textup{)}
        If $\opt(G) > k_s$, then $\opt_H(X \reco Y) > k_s+\ell$.
    \end{itemize}
    Moreover, $H$ contains $\bigO(|V(G)|^2 \cdot \ell)$ vertices.
\end{proposition}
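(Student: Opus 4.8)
The plan is to prove \cref{prp:reduct} by a single gadget construction equipped with two matching bounds, reading the target factor as $\frac{k_s+\ell}{k_c+\ell}$: taking $\ell$ just above $k_s$ with $k_c$ negligible drives this toward $2$, which is exactly why the statement packages the gap as an \emph{additive} shift by $\ell$ rather than a multiplicative blow-up. Concretely, from $G=(V,E,\tau)$ I would build $H$ out of two interacting parts. The first is a faithful copy of $G$ (the \emph{certificate region}), whose role is that genuinely activating it costs a target set of $G$, i.e.\ at least $\opt(G)$ vertices. The second is a \emph{switch gadget} on $\bigO(\ell)$ vertices whose activation landscape has exactly two canonical minimal target configurations of size $\ell$, namely $X$ and $Y$; the thresholds are set so that $X$ and $Y$ each activate all of $H$, so both are target sets of size $\ell$ as required. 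The quadratic vertex count is absorbed by \emph{feeder} vertices used to transmit activation into the certificate region with the correct multiplicities: to let a designated vertex substitute for $\tau(v)$ neighbors of a core vertex $v$ one attaches $\tau(v)$ private feeders, and $\sum_{v}\tau(v)\leq\sum_v\deg(v)=2|E|=\bigO(|V|^2)$, which gives the claimed $\bigO(|V(G)|^2+\ell)$ bound.

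\textbf{Completeness.} Assuming a target set $S$ of $G$ with $|S|\leq k_c$, I would exhibit an explicit three-phase reconfiguration realizing $\|\scrS\|\leq k_c+\ell$. Phase one: starting from $X$ (size $\ell$), add the $\leq k_c$ vertices of $S$ one by one; every intermediate is a superset of the target set $X$, hence a target set, and the size peaks at $\ell+k_c$. Phase two: with $S$ now independently activating the certificate region, the switch gadget is ``free-riding,'' so its tokens may be swapped from the $X$-configuration to the $Y$-configuration one at a time while $S$ keeps $H$ a target set throughout, keeping the size $\leq \ell + k_c$. Phase three: remove $S$, leaving $Y$, which is a target set. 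The peak is $\ell+k_c$, giving $\opt_H(X\reco Y)\leq k_c+\ell$.

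\textbf{Soundness.} I would argue the contrapositive: from any reconfiguration sequence $\scrS$ with $\|\scrS\|\leq k_s+\ell$ I extract a target set of $G$ of size $\leq k_s$. The key structural design goal is that the two canonical switch-configurations $X$ and $Y$ be \emph{incompatible}, so that no step-by-step interpolation between them is feasible unless the certificate region is activated on its own. Since $\scrS$ begins in the $X$-pattern and ends in the $Y$-pattern and changes one vertex per step, a monovariant (or simple connectivity) argument produces an intermediate configuration $S^\star$ realizing neither canonical pattern; feasibility of $S^\star$ then forces $S^\star$ to activate the certificate region genuinely, i.e.\ $S^\star$ restricted to the core contains a target set of $G$. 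Simultaneously the switch gadget must still contribute $\geq \ell$ vertices to $S^\star$ (this is the ``floor'' that coexists with the certificate). Hence $|S^\star\cap \text{core}|\leq |S^\star|-\ell\leq (k_s+\ell)-\ell=k_s$, and $S^\star\cap\text{core}$ is the desired target set of $G$ of size $\leq k_s$, so $\opt(G)\leq k_s$.

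\textbf{Main obstacle.} The delicate point, and where most of the work will go, is designing the switch gadget so that the two properties used in soundness hold \emph{simultaneously}: (i) every target set of $H$ must devote $\geq \ell$ vertices to the switch, and (ii) any configuration that is not in a canonical pattern must pay for the certificate region in full. The tension is that $X$ and $Y$ must activate the certificate region cheaply (they have size only $\ell$), yet no configuration ``between'' them may inherit that shortcut; because threshold activation is monotone, a careless gadget admits a small proxy set whose activation short-circuits the whole graph, and the reconfiguration could then add that proxy and collapse $\|\scrS\|$ to $\approx\ell$, destroying the gap. The construction must therefore make the cheap activation \emph{pattern-dependent} — available only for the exact $X$- and $Y$-seedings and their supersets, never for mixed or partial seedings — and must do so without creating any low-cost bypass; engineering the thresholds and feeder multiplicities to guarantee this, and then verifying that $H$ has no unintended activation, is the crux of the proof.
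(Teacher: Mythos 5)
Your high-level architecture matches the paper's reduction --- a copy of $G$ as the certificate region, two anchor sets $X,Y$ of size $\ell$ wired in by one-way/feeder gadgets, completeness via add-$S$/switch/remove-$S$, soundness via an intermediate set charged $\ell$ for the anchors plus a target set of $G$ for the core --- and the quadratic size accounting is correct. But there are two genuine gaps. First, the gadget is never actually constructed: you explicitly defer ``engineering the thresholds and feeder multiplicities'' as the crux, so neither direction can be verified. The paper's resolution is concrete: anchors $x_i,y_i$ of threshold $n$ fed by one-way gadgets from every $v\in V$, and per-vertex relays $a_{v,j},b_{v,j}$ ($1\leq j\leq d_v$) of threshold $\ell$ fed from all of $X$ (resp.\ $Y$) and feeding back into $v$. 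This yields exactly the characterization your soundness needs (\cref{clm:VXY-target}): a set $S\subseteq V\uplus X\uplus Y$ is a target set for $H$ if and only if $X\subseteq S$, or $Y\subseteq S$, or $S\cap V$ is a target set for $G$ --- note this is a disjunction, not the conjunction of your design goals (i) and (ii).

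Second, and more seriously, your soundness applies the $\ell$-floor to the wrong set. You locate an $S^\star$ realizing neither canonical pattern, argue its core part must be a target set of $G$, and then assert that the switch still contributes $\geq\ell$ vertices to it. That floor cannot hold for such sets: your own completeness sequence (phase two must remove an $X$-token before adding a $Y$-token to respect the bound $k_c+\ell$) passes through target sets with only $\ell-1$ switch vertices, and once the core certificate is paid nothing prevents a sequence from emptying the switch entirely --- the paper's completeness sequence passes through $S$ itself, with zero anchor vertices. So property (i) is incompatible with completeness, and the inequality $|S^\star\cap\text{core}|\leq|S^\star|-\ell$ is unjustified for the $S^\star$ you chose. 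The paper instead selects a different witness: since $X\uplus Y$ is too large to ever appear, some configuration $S^\circ$ contains neither $X$ nor $Y$, and one takes $S^\star$ to be the configuration just before $X$ is first broken on the way to $S^\circ$. Its successor contains neither anchor set, so by the characterization its core part is a target set of $G$; $S^\star$ is a superset of that successor, so its core part is too, \emph{and} $S^\star$ still contains all $\ell$ vertices of $X$, which is what legitimizes $|S^\star|\geq|S^\star\cap V|+\ell$. You also omit the normalization step (\cref{clm:VXYAB} and \cref{clm:VXY}) that projects gadget-internal and relay vertices out of an arbitrary optimal sequence without increasing its size; without it the characterization cannot be applied to the sequence at all.
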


\noindent
\cref{thm:main} follows from \cref{prp:reduct} and
\NP-hardness of approximation for \prb{Target Set Selection} 
\cite{dinur2004hardness,chen2009approximability,charikar2016approximating}.

\begin{proof}[Proof of \cref{thm:main}]
We first recapitulate \NP-hardness of approximation for \prb{Target Set Selection}.
By \citet[Theorem~2]{dinur2004hardness},
it is \NP-hard to distinguish for a \prb{Label Cover} instance on $N$ vertices,
whether its optimal solution has size
at most $N$ or more than $g(N) \cdot N$,
where $g(N) \triangleq 2^{\log^{0.99} N}$.
Note that $g$ grows faster than any polylogarithmic function;
namely, $g(N) = \omega(\polylog N)$.
By using \citet[Theorem~2.1]{chen2009approximability},
we construct a \prb{Target Set Selection} instance $G$
from a \prb{Label Cover} instance on $N$ vertices,
where $G$ contains $\bigO(N^2)$ vertices.\footnote{
Although \citet{chen2009approximability} assumes in Theorem~2.1 that
$\NP \not\subseteq \DTIME(n^{\polylog(n)})$,
as pointed out by \citet{charikar2016approximating},
we can remove this assumption by reducing from the \prb{Label Cover} problem \cite[Theorem~2]{dinur2004hardness}.
}
Since \citeauthor{chen2009approximability}'s reduction \cite[Theorem~2.1]{chen2009approximability}
preserves the size of the optimal solution of \prb{Label Cover}
by (at most) a factor of $2$,
it is \NP-hard to distinguish whether
$\opt(G) \leq N$ (i.e., completeness) or 
$\opt(G) > \left\lceil\frac{1}{2}g(N) \cdot N \right\rceil$ (i.e., soundness).
Applying \cref{prp:reduct} to $G$ with
$k_c \triangleq N$,
$k_s \triangleq \left\lceil\frac{1}{2}g(N) \cdot N\right\rceil$, and
$\ell \triangleq k_s+1 = \left\lceil \frac{1}{2} g(N)\cdot N +1 \right\rceil$,
we obtain a \prb{Minmax Target Set Reconfiguration} instance $(H;X,Y)$,
where $H$ contains $\bigO(|V(G)|^2\cdot\ell) = \bigO(N^6)$ vertices and
$X$ and $Y$ are a pair of its target sets of size $\ell$,
such that the following hold:
\begin{itemize}
    \item (Completeness)
        If $\opt(G) \leq N$,
        then $\opt_H(X \reco Y) \leq N + \ell$;
    \item (Soundness)
        If $\opt(G) > \left\lceil \frac{1}{2}g(N) \cdot N \right\rceil$,
        then $\opt_H(X \reco Y) > \left\lceil\frac{1}{2}g(N) \cdot N\right\rceil + \ell$.
\end{itemize}
Consequently,
the inapproximability factor (i.e., the ratio of soundness to completeness) 
can be bounded as
\begin{align}
\begin{aligned}
    \frac{\left\lceil\frac{1}{2}g(N) \cdot N\right\rceil + \ell}{N+\ell}
    & = \frac{\left\lceil\frac{1}{2}g(N) \cdot N\right\rceil + \left\lceil \frac{1}{2} g(N)\cdot N +1 \right\rceil}{N + \left\lceil \frac{1}{2} g(N)\cdot N +1 \right\rceil} \\ 
    & \geq \frac{g(N) \cdot N}{\frac{1}{2} g(N)\cdot N + N + 2} \\
    & = 2 - \frac{2N+4}{\frac{1}{2} g(N)\cdot N + N + 2} \\
    & = 2 - \bigO\left(\frac{1}{g(N)}\right)
    = 2-o\left(\frac{1}{\polylog N}\right).
\end{aligned}
\end{align}
Observing that $N = \Omega\left(|V(H)|^{\frac{1}{6}}\right)$ completes the proof.
\end{proof}

\subsection{Reduction}
Our gap-preserving reduction from \prb{Target Set Selection} to \prb{Minmax Target Set Reconfiguration}
is described as follows.
We first introduce \emph{one-way gadgets}
\cite{kyncl2017irreversible,charikar2016approximating,bazgan2014parameterizeda}.

\begin{definition}[One-way gadget \cite{kyncl2017irreversible,charikar2016approximating,bazgan2014parameterizeda}]
    A \emph{one-way gadget} is defined as a graph $D=(V,E,\tau)$ such that
    \begin{align}
    \begin{aligned}
        V & \triangleq \bigl\{ t,h,b_1,b_2 \bigr\}, \\
        E & \triangleq \bigl\{ (t,b_1), (t,b_2), (h,b_1), (h,b_2) \bigr\}, \\
        \tau(t)& =\tau(b_1)=\tau(b_2)=1 \text{ and } \tau(h)=2.
    \end{aligned}
    \end{align}
    The vertices of $V$ are called the \emph{internal vertices} of $D$.
    We say that $D$ \emph{connects from vertex $v$ to vertex $w$} if there exist two edges $(v,t)$ and $(w,h)$;
    $v$ is called the \emph{tail} and $w$ is called the \emph{head} of $D$.
\end{definition}

Observe that an active tail $v$ activates all internal vertices $t$, $h$, $b_1$, and $b_2$, but an active head $w$ does not.

Let $G=(V,E,\tau)$ be a graph on $n$ vertices
representing an instance of \prb{Target Set Selection}.
Without loss of generality, we can assume that $G$ has no isolated vertices.
Hereafter, let $d_v$ denote the degree of $v$.
Given a positive integer $\ell \in \bbN$,
we construct a new graph $H$ as follows.

\begin{itembox}[l]{Construction of $H$}
\small
\begin{algorithmic}[1]
    \Require a graph $G=(V,E,\tau)$ on $n$ vertices and a positive integer $\ell$.
    \Ensure a graph $H$ with a vertex threshold function $\tau'$.
    \State create a copy of $G$.  \Comment{the vertex threshold function $\tau$ is also copied.}
    \State create four different groups of new vertices, denoted by $X$, $Y$, $A$, and $B$, as follows:
    \begin{align}
        X & \triangleq \bigl\{x_i \bigm| 1 \leq i \leq \ell\bigr\}
            & \text{s.t. } \forall x_i \in X,\; \tau'(x_i) & \triangleq |V| = n, \\
        Y & \triangleq \bigl\{y_i \bigm| 1 \leq i \leq \ell\bigr\}
            & \text{s.t. } \forall y_i \in Y,\; \tau'(y_i) & \triangleq |V| = n, \\
        A & \triangleq \bigl\{a_{v,j} \bigm| v \in V \text{ and } 1 \leq j \leq d_v\bigr\}
            & \text{s.t. } \forall a_{v,j} \in A,\; \tau'(a_{v,j}) & \triangleq |X| = \ell, \\
        B & \triangleq \bigl\{b_{v,j} \bigm| v \in V \text{ and } 1 \leq j \leq d_v \bigr\}
            & \text{s.t. } \forall b_{v,j} \in B,\; \tau'(b_{v,j}) & \triangleq |Y| = \ell.
    \end{align}
    \LComment{connect some vertices of $V\uplus X \uplus Y \uplus A \uplus B$ by one-way gadgets.}
    \For{\textbf{each} $v \in V$ and $x_i \in X$}
        \State create a one-way gadget $D_{v,x_i}$ connecting from $v$ to $x_i$.
    \EndFor
    \For{\textbf{each} $x_i \in X$ and $a_{v,j} \in A$}
        \State create a one-way gadget $D_{x_i,a_{v,j}}$ connecting from $x_i$ to $a_{v,j}$.
    \EndFor
    \For{\textbf{each} $a_{v,j} \in A$}
        \State create a one-way gadget $D_{a_{v,j},v}$ connecting from $a_{v,j}$ to $v$.
    \EndFor
    \For{\textbf{each} $v \in V$ and $y_i \in Y$}
        \State create a one-way gadget $D_{v,y_i}$ connecting from $v$ to $y_i$.
    \EndFor
    \For{\textbf{each} $y_i \in Y$ and $b_{v,j} \in B$}
        \State create a one-way gadget $D_{y_i,b_{v,j}}$ connecting from $y_i$ to $b_{v,j}$.
    \EndFor
    \For{\textbf{each} $b_{v,j} \in B$}
        \State create a one-way gadget $D_{b_{v,j},v}$ connecting from $b_{v,j}$ to $v$.
    \EndFor
    \State \textbf{return} the resulting graph $H$.
\end{algorithmic}
\end{itembox}

See \cref{fig:H} for an illustration of the construction of $H$.
Note that
$|X| = |Y| = \ell$ and
$|A| = |B| = \sum_{v \in V} d_v = 2|E|$; in particular,
\begin{align}
\begin{aligned}
    |V(H)|
    & = |V(G)|+|X|+|Y|+|A|+|B| \\
    & \quad+ \underbrace{4 \cdot (|V(G) \times X| + |X \times A| + |A| + |V(G) \times Y| + |Y \times B| + |B|)}_{\text{the number of vertices in the one-way gadgets}} \\
    & = \bigO(|V(G)|^2 \cdot \ell).
\end{aligned}
\end{align}
Then, an instance of \prb{Minmax Target Set Reconfiguration} is defined as $(H;X,Y)$,
where each $X$ and $Y$ is a target set for $H$ as proven below,
completing the description of the reduction.

\begin{observation}
\label{obs:target}
    Each of $X$, $Y$, and any target set for $G$ is a target set for $H$.
\end{observation}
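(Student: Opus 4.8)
The plan is to show directly that, starting from each of the three sets, the activation process in $H$ eventually reaches all of $V(H)$. The engine is the forward-propagation property of one-way gadgets recorded right after the definition: whenever the tail of a gadget is active, all four internal vertices $t,h,b_1,b_2$ become active, so its head vertex $h$ becomes an active neighbor of whatever vertex the gadget points into. I will therefore treat each gadget connecting from $u$ to $w$ as a device that, once $u$ is active, contributes exactly one active neighbor to $w$. With this in hand, activation threads through the layers $X \to A \to V \to Y \to B$ (and symmetric orderings), and at each layer I only need to check that the number of incoming gadgets matches the threshold of the receiving vertices.

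First I would prove $\calA_H(X) = V(H)$. Starting from $X$ active: each $a_{v,j} \in A$ is the head of exactly $|X| = \ell$ gadgets $D_{x_i,a_{v,j}}$, so once all of $X$ is active it receives $\ell$ active neighbors and, since $\tau'(a_{v,j}) = \ell$, becomes active; hence all of $A$ activates. Next, each $v \in V$ is the head of exactly $d_v$ gadgets $D_{a_{v,j},v}$, so it receives $d_v$ active neighbors from $A$; as $\tau'(v) = \tau(v) \leq d_v$, every $v$ activates and all of $V$ becomes active. Then each $y_i \in Y$ is the head of the $n$ gadgets $D_{v,y_i}$ and has threshold $n$, so $Y$ activates, and finally each $b_{v,j} \in B$ is the head of $\ell$ gadgets $D_{y_i,b_{v,j}}$ with threshold $\ell$, so $B$ activates. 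Since every gadget's tail lies in $V \uplus X \uplus Y \uplus A \uplus B$, all of which are now active, all gadget-internal vertices activate as well, giving $\calA_H(X)=V(H)$. The claim $\calA_H(Y)=V(H)$ follows by the mirror-image propagation $Y \to B \to V \to X \to A$.

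For an arbitrary target set $S$ of $G$, I would first argue that $V \subseteq \calA_H(S)$: the copy of $G$ inside $H$ retains every $G$-edge and every threshold $\tau(v)$, and in $H$ each $v \in V$ only gains additional neighbors (the gadget endpoints), so the $H$-activation process dominates the $G$-activation process on the copy of $G$ step by step. Since $S$ activates all of $V$ in $G$, it does so in $H$. Once $V$ is active, the argument above (resumed from the moment $V$ became active) shows that $X$, $A$, $Y$, $B$, and all gadget interiors activate, so $\calA_H(S) = V(H)$.

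The one genuinely delicate point is the threshold bookkeeping rather than any deep idea: I must confirm that the incoming-gadget count equals the threshold at each receiving layer ($\ell$ into each $A$- and $B$-vertex, $n$ into each $X$- and $Y$-vertex, and $d_v$ into each $v \in V$), and in particular that activating $V$ through the $A$-gadgets relies on $\tau(v) \leq d_v$, which is without loss of generality for \prb{Target Set Selection}; note that at the moment $A$ feeds into $V$ no $G$-neighbor of $v$ is yet active, so each $v$ must meet its threshold from its $d_v$ incoming gadgets alone. I should also observe that this observation uses only the forward direction of the one-way gadget; the fact that a head does not activate its gadget's interior is not needed here (it will matter only for soundness), so no backward-leakage analysis is required.
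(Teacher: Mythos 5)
Your proof is correct and takes essentially the same approach as the paper: a layered propagation $X \to A \to V \to Y \to B$ (and its mirror image for $Y$), together with monotonicity of the activation process on the embedded copy of $G$ for the third case. You are in fact slightly more explicit than the paper in flagging the bookkeeping point that activating $V$ from the $d_v$ incoming gadgets requires $\tau(v) \leq d_v$, an assumption the paper's own proof also uses tacitly.
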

\begin{proof}%[Proof of \cref{obs:target}]
We first show that $X$ is a target set for $H$.
Initially activating the vertices of $X$ would make (the internal vertices of) all one-way gadgets
connecting from $X$ to $A$ active.
Since all vertices of $A$ have threshold $\ell = |X|$, they would also become activated.
For each vertex $v$ of $G$, there are $d_v$ vertices of $A$ connected to $v$ via one-way gadgets;
thus, all vertices of $G$ would get activated, which results in
activation of the other vertices (including $Y\uplus B$);
i.e., $X$ is a target set for $H$.
Similarly, $Y$ is a target set for $H$. 
Observe finally that activating all vertices of $G$ would make
the vertices of $X$ (and thus all vertices of $H$) active,
implying that any target set for $G$ is also a target set for $H$.
This completes the proof.
\end{proof}

\begin{figure}[t]
    \centering
    \resizebox{\textwidth}{!}{%
        \begin{tikzpicture}
\newcommand{\nicered}{Red}
\newcommand{\niceblue}{Blue}

\tikzset{node/.style={circle, very thick, draw=black, fill=white, font=\Huge, text centered, inner sep=0, outer sep=0, minimum size=12mm}};
% https://tex.stackexchange.com/questions/39278/tikz-arrowheads-in-the-center
\tikzset{edge/.style={decoration={
        markings,
        mark=at position 0.68 with {\arrow{Stealth[round, width=4mm, length=4mm]}}  % fill=white
    },
    postaction={decorate},
    ultra thick
}};

%\draw[dotted, thick] (-16,-6) grid (16,6);

\node[node] (v1) at (-2,0) {$v_1$};
\node[node] (v2) at (+0,1) {$v_2$};
\node[node] (v3) at (+2,0) {$v_3$};
\node[node] (x1) at (-10,2) {$x_1$};
\node[node] (x2) at (-8,2) {$x_2$};
\node[node] (x3) at (-6,2) {$x_3$};
\node[node] (y1) at (+6,2) {$y_1$};
\node[node] (y2) at (+8,2) {$y_2$};
\node[node] (y3) at (+10,2) {$y_3$};
\node[node] (a11) at (-11,-2) {\Large $a_{v_1,1}$};
\node[node] (a21) at (-9,-2) {\Large $a_{v_2,1}$};
\node[node] (a22) at (-7,-2) {\Large $a_{v_2,2}$};
\node[node] (a31) at (-5,-2) {\Large $a_{v_3,1}$};
\node[node] (b11) at (+5,-2) {\Large $b_{v_1,1}$};
\node[node] (b21) at (+7,-2) {\Large $b_{v_2,1}$};
\node[node] (b22) at (+9,-2) {\Large $b_{v_2,2}$};
\node[node] (b31) at (+11,-2) {\Large $b_{v_3,1}$};

\draw[ultra thick] (v1)--(v2)--(v3);
%\foreach \P in {v1,v2,v3}
%    \foreach \Q in {x1,x2,x3}
%        \draw[edge, draw=\niceblue] (\P) to [out=105, in=45] (\Q);
\foreach \P/\angP in {v2/40,v3/45,v1/35}
    \foreach \Q/\angQ in {x1/95,x2/100,x3/105}
%        \draw[edge, draw=\niceblue] (\P) to [out=120, in=45] (\Q);
        \draw[edge, draw=\niceblue] (\P) to [out=\angQ, in=\angP] (\Q);
\foreach \P/\angP in {v2/130,v3/135,v1/125}
    \foreach \Q/\angQ in {y3/85,y2/80,y1/75}
        \draw[edge, draw=\nicered] (\P) to [out=\angQ, in=\angP] (\Q);

\foreach \P in {x1,x2,x3}
    \foreach \Q in {a11,a21,a22,a31}
        \draw[edge, draw=\niceblue] (\P) to (\Q);
\foreach \P in {y1,y2,y3}
    \foreach \Q in {b11,b21,b22,b31}
        \draw[edge, draw=\nicered] (\P) to (\Q);

\foreach \P/\Q/\ang in {a31/v3/-95, a21/v2/-95, a22/v2/-100, a11/v1/-95}
    \draw[edge, draw=\niceblue] (\P) to [out=-60, in=\ang] (\Q);
\foreach \P/\Q/\ang in {b11/v1/-85, b22/v2/-85, b21/v2/-80, b31/v3/-85}
    \draw[edge, draw=\nicered] (\P) to [out=-120, in=\ang] (\Q);

%\foreach \P / \Q in {v1/{v_1}, v2/{v_2}, v3/{v_3}}
%    \node[above=1mm of \P, fill=white, fill opacity=0.8, text opacity=1, inner sep=0mm] {\normalsize $\tau(\Q)=1$};

\draw[densely dashed, thick] \convexpath{x1,x2,x3}{9mm};
\draw[densely dashed, thick] \convexpath{y1,y2,y3}{9mm};
\draw[densely dashed, thick] \convexpath{a11,a21,a22,a31}{9mm};
\draw[densely dashed, thick] \convexpath{b11,b21,b22,b31}{9mm};
\draw[densely dashed, thick] \convexpath{v1,v2,v3}{9mm};

\node[centered] at (0,-1.4) {\Huge $G$};
\node[left=8mm of x1, centered]{\Huge $X$};
\node[right=8mm of y3, centered]{\Huge $Y$};
\node[left=8mm of a11, centered]{\Huge $A$};
\node[right=8mm of b31, centered]{\Huge $B$};

% Curved paths have control points outside their visible curves. TikZ includes those invisible control points in the natural bounding box, so reset it to the rectangle that contains the visible drawing.
\pgfresetboundingbox
\path[use as bounding box] (-12.91,-4.0) rectangle (12.91,4.7);

\end{tikzpicture}
    }
    \caption{
        Construction of $H$ when $n=3$ and $\ell = 3$.
        One-way gadgets are denoted by 
        \scalebox{0.6}{\protect\tikz \protect\draw[-{Stealth[round, width=4mm, length=4mm]}, ultra thick] (0,0)--(.5,0);}.
    }
    \label{fig:H}
    % https://tex.stackexchange.com/questions/56079/using-tikz-inside-a-figure-caption
\end{figure}

We first prove the completeness part of \cref{prp:reduct}.
\begin{lemma}
\label{lem:completeness}
    If $\opt(G) \leq k_c$, then $\opt_H(X \reco Y) \leq k_c + \ell$.
\end{lemma}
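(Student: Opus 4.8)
The plan is to route the reconfiguration from $X$ to $Y$ through a minimum target set of $G$, exploiting monotonicity of the activation process. First I would fix a minimum target set $S \subseteq V$ of $G$, so that $|S| = \opt(G) \leq k_c$. By \cref{obs:target}, each of $X$, $Y$, and $S$ is a target set for $H$. The key structural fact I would establish (by a one-line induction on the step index $i$) is that $\calA^{(i)}$ is monotone in its argument, so that every superset of a target set is again a target set; in particular, any set containing $X$, containing $Y$, or containing $S$ is a target set for $H$.

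With this in hand, I would build the reconfiguration sequence $\scrS$ in four phases of single-vertex moves: add the vertices of $S$ to $X$ one at a time to reach $X \uplus S$; delete the vertices of $X$ one at a time to reach $S$; add the vertices of $Y$ to $S$ one at a time to reach $S \uplus Y$; and finally delete the vertices of $S$ one at a time to reach $Y$. Here I use that $S \subseteq V$ is disjoint from the freshly created groups $X$ and $Y$. Every intermediate set in the first half contains $X$ or $S$, and every intermediate set in the second half contains $S$ or $Y$; by the monotonicity fact above, each is a target set for $H$, so $\scrS$ is a valid reconfiguration sequence from $X$ to $Y$.

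Finally I would bound $\|\scrS\|$. The two largest configurations are $X \uplus S$ and $S \uplus Y$, each of size $\ell + |S| \leq \ell + k_c$, and every other configuration is smaller; hence $\|\scrS\| \leq k_c + \ell$, yielding $\opt_H(X \reco Y) \leq k_c + \ell$. There is no serious obstacle in this direction: the entire argument reduces to the superset-closure of target sets. The only point that genuinely needs care is checking that passing through the small set $S$ never breaks the target-set property of an intermediate configuration, which is precisely what monotonicity guarantees; the real work of \cref{prp:reduct} will instead lie in the soundness direction, not treated here.
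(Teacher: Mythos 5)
Your proposal is correct and follows essentially the same route as the paper: the same four-phase sequence $X \to X \uplus S \to S \to S \uplus Y \to Y$, with every intermediate set containing $X$, $Y$, or $S$ and hence being a target set for $H$ by \cref{obs:target} together with the (implicit in the paper, explicit in your write-up) monotonicity of the activation process. The only difference is that you spell out the superset-closure argument that the paper leaves tacit, which is a harmless elaboration rather than a different approach.
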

\begin{proof}%[Proof of \cref{lem:completeness}]
Let $S \subseteq V$ be a target set of size at most $ k_c$ for $G$.
Consider a reconfiguration sequence $\scrS$ from $X$ to $Y$
obtained by the following procedure.

\begin{itembox}[l]{Reconfiguration sequence from $X$ to $Y$}
\begin{algorithmic}[1]
    \LComment{start with $X$.}
    \State add the vertices of $S$ one by one.
    \LComment{obtain $X \uplus S$.}
    \State remove the vertices of $X$ one by one.
    \LComment{obtain $S$.}
    \State add the vertices of $Y$ one by one.
    \LComment{obtain $Y \uplus S$.}
    \State remove the vertices of $S$ one by one.
    \LComment{end with $Y$.}
\end{algorithmic}
\end{itembox}
Each vertex set of $\scrS$ contains either $X$, $Y$, or $S$,
which is thus a target set for $H$ by \cref{obs:target};
i.e., $\scrS$ is a reconfiguration sequence from $X$ to $Y$.
Clearly, it holds that
\begin{align}
    \opt_H(X \reco Y)
    \leq \|\scrS\|
    = \max\bigl\{|X \uplus S|, |Y \uplus S|\bigr\}
    \leq k_c+\ell,
\end{align}
completing the proof.
\end{proof}

\subsection{Proof of Soundness}
In the remainder of this section, we prove the soundness part of \cref{prp:reduct}.
\begin{lemma}
\label{lem:soundness}
    If $\opt(G) > k_s$, then $\opt_H(X \reco Y) > k_s+\ell$.
\end{lemma}
Suppose (for contraposition) that we are given a reconfiguration sequence
$\scrS = ( S^{(1)}, \ldots, S^{(\TTT)} )$ from $X$ to $Y$ such that
$\|\scrS\| = \opt_H(X \reco Y)$.
If $\|\scrS\|$ were relatively small,
then we would like to extract from $\scrS$ a relatively small target set for $G$.
To this end,
we shall render $\scrS$ ``easy-to-handle'' without affecting its size
through a couple of transformations.
We first remove from $\scrS$ all internal vertices of the one-way gadgets.

\begin{claim}
\label{clm:VXYAB}
    There exists a reconfiguration sequence $\scrS$ from $X$ to $Y$
    consisting only of
    subsets of $V \uplus X \uplus Y \uplus A \uplus B$ and
    satisfying
    $\|\scrS\| = \opt_H(X \reco Y)$.
    In particular,
    any target set in $\scrS$ does not contain any internal vertices of the one-way gadgets of $H$.
\end{claim}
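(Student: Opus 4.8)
The plan is to turn an optimal reconfiguration sequence $\scrS = (S^{(1)},\dots,S^{(\TTT)})$ into one of the same size that never touches an internal vertex, by \emph{replacing} internal vertices rather than deleting them. I would first note why the obvious projection $S \mapsto S \setminus I$ (with $I$ the set of all internal vertices) is not enough: a single internal vertex can already be a target set for $H$. For instance, if some $v \in V$ has $\tau(v) = 1$ and $\{v\}$ is a target set for $G$, then seeding only the head-internal vertex $h$ of the gadget $D_{a_{v,1},v}$ activates $v$, which then cascades through the copy of $G$ and hence through all of $H$; projecting this set to $\emptyset$ destroys the target-set property. So deletion must be upgraded to substitution.

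Concretely, for each internal vertex $u$ let $\operatorname{tail}(u)$ denote the tail of the gadget containing $u$, and define $\operatorname{cl}(S) \triangleq (S \setminus I) \cup \{\operatorname{tail}(u) : u \in S \cap I\}$. Every gadget's tail lies in $V \uplus X \uplus Y \uplus A \uplus B$, so $\operatorname{cl}(S)$ avoids internal vertices. I would then verify two facts. First, $\operatorname{cl}(S)$ is again a target set with $|\operatorname{cl}(S)| \leq |S|$: the size bound holds because the four internal vertices of any gadget all map to one tail and that tail may already lie in $S \setminus I$; the target-set property follows from the observation after the gadget definition that a tail activates all internal vertices of its gadget, so $S \subseteq \calA(\operatorname{cl}(S))$, whence $V(H) = \calA(S) \subseteq \calA(\operatorname{cl}(S))$ using that $\calA$ is monotone and idempotent.

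Second, applying $\operatorname{cl}$ termwise keeps consecutive sets at most one token apart. If $S^{(\ttt-1)}$ and $S^{(\ttt)}$ differ in an external vertex, their internal parts and hence their adjoined tails coincide, so the cleaned sets differ by at most that one vertex. If they differ in an internal vertex $z$ with tail $p$, then $\operatorname{cl}$ of the two sets differ only by whether $p$ is adjoined, i.e.\ by at most $p$ (coinciding exactly when $p$ is already present). Collapsing maximal runs of equal consecutive sets then yields a genuine reconfiguration sequence $\scrS'$ from $\operatorname{cl}(X)=X$ to $\operatorname{cl}(Y)=Y$, made only of internal-vertex-free target sets, with $\|\scrS'\| \leq \|\scrS\| = \opt_H(X \reco Y)$; minimality of $\opt_H(X \reco Y)$ forces equality.

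The main obstacle is exactly the bookkeeping forced by substitution rather than deletion: because $u \mapsto \operatorname{tail}(u)$ is many-to-one and its image can collide with vertices already present, I must argue carefully that no cleaning step enlarges a set or separates two consecutive sets by more than one token. Both points reduce to the collapsing behavior of this map, while the target-set half rests on the tail-activates-gadget observation together with the elementary monotonicity of the activation process.
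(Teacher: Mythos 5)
Your proof is correct and takes essentially the same route as the paper: your map $\operatorname{cl}$ coincides exactly with the paper's projection (which aggregates any chosen internal vertices of a gadget into that gadget's tail), and both arguments rest on the same three checks --- target-set preservation via the tail-activates-its-gadget observation, non-increase of cardinality, and consecutive sets differing by at most one vertex. Your extra touches (the example showing that plain deletion of internal vertices fails, and the explicit collapsing of runs of equal consecutive sets) are sound refinements of points the paper leaves implicit.
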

\begin{proof}%[Proof of \cref{clm:VXYAB}]
Let $\scrS = ( S^{(1)}, \ldots, S^{(\TTT)} )$
be any reconfiguration sequence from $X$ to $Y$ such that
$\|\scrS\| = \opt_H(X \reco Y)$.
Consider then a new sequence
$\scrR = ( R^{(1)}, \ldots, R^{(\TTT)} )$ such that
each $R^{(\ttt)}$ is obtained from $S^{(\ttt)}$
by ``aggregating'' the chosen internal vertices of each one-way gadget into its tail.
Formally,
for a vertex set $S \subseteq V(H)$,
we define $\proj_{\text{\ref{clm:VXYAB}}}(S)$ as
\begin{align}
    \proj_{\text{\ref{clm:VXYAB}}}(S) \triangleq
    \bigl( (V \uplus X \uplus Y \uplus A \uplus B) \cap S \bigr)
    \cup
    \bigl\{
        \text{tail } v \text{ of } D_{v,w} \bigm| \text{some intl.~vert.~of } D_{v,w} \text{ is in } S
    \bigr\}.
\end{align}
Note that $\proj_{\text{\ref{clm:VXYAB}}}(S) \subseteq V \uplus X \uplus Y \uplus A \uplus B$.
Note also that if $S$ and $S'$ differ in at most one vertex,
then $\proj_{\text{\ref{clm:VXYAB}}}(S)$ and $\proj_{\text{\ref{clm:VXYAB}}}(S')$ differ in at most one vertex,
which follows from the \emph{projection} property of $\proj_{\text{\ref{clm:VXYAB}}}$; namely,
$\proj_{\text{\ref{clm:VXYAB}}}(S) = \bigcup_{v \in S} \proj_{\text{\ref{clm:VXYAB}}}(\{v\})$.
Let $R^{(\ttt)} \triangleq \proj_{\text{\ref{clm:VXYAB}}}(S^{(\ttt)})$
for all $1 \leq \ttt \leq \TTT$.
In particular, it holds that
$R^{(1)} = \proj_{\text{\ref{clm:VXYAB}}}(X) = X$ and
$R^{(\TTT)} = \proj_{\text{\ref{clm:VXYAB}}}(Y) = Y$.

Observe that
for any one-way gadget $D_{v,w}$ from $v$ to $w$,
activating its tail $v$ leads to all internal vertices of $D_{v,w}$ being active, implying that
if $S$ is a target set, then so is $\proj_{\text{\ref{clm:VXYAB}}}(S)$; i.e.,
every $R^{(\ttt)}$ must be a target set.
Since $|R^{(\ttt)}| \leq |S^{(\ttt)}|$, and
$R^{(\ttt-1)}$ and $R^{(\ttt)}$ differ in at most one vertex for all $2 \leq \ttt \leq \TTT$,
$\scrR$ is a reconfiguration sequence from $X$ to $Y$ such that
$\|\scrR\| \leq \|\scrS\| = \opt_H(X \reco Y)$, as desired.
\end{proof}

Subsequently, we will discard the vertices of $A \uplus B$, without increasing the size.

\begin{claim}
\label{clm:VXY}
    There exists a reconfiguration sequence $\scrS$ from $X$ to $Y$
    consisting only of subsets of $V \uplus X \uplus Y$ and satisfying
    $\|\scrS\| = \opt_H(X \reco Y)$.
\end{claim}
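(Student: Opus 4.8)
The plan is to mimic the proof of \cref{clm:VXYAB}: start from the reconfiguration sequence $\scrS = (S^{(1)},\ldots,S^{(\TTT)})$ produced there, which already avoids all gadget internals and satisfies $\|\scrS\| = \opt_H(X\reco Y)$, and apply a second projection that eliminates the vertices of $A \uplus B$. A crucial point is that one cannot simply \emph{delete} the $A$- and $B$-vertices: a single seed $a_{v,j}$ can already be a target set (it activates $v$, hence all of $G$, hence all of $H$), whereas the empty set is not. Instead I would \emph{aggregate} each chosen $A$- or $B$-vertex into the unique vertex of $V$ it feeds, exploiting that the only outgoing gadget of $a_{v,j}$ (resp.\ $b_{v,j}$) has head $v$. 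Concretely, define
\begin{align*}
  \proj_{\text{\ref{clm:VXY}}}(S) \triangleq \Bigl((V\uplus X\uplus Y)\cap S\Bigr) \cup \Bigl\{ v\in V \Bigm| a_{v,j}\in S \text{ or } b_{v,j}\in S \text{ for some } j \Bigr\},
\end{align*}
and set $R^{(\ttt)} \triangleq \proj_{\text{\ref{clm:VXY}}}(S^{(\ttt)})$.

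Three of the required properties are routine. Since $X$ and $Y$ contain no vertex of $A\uplus B$, the endpoints are preserved: $R^{(1)}=X$ and $R^{(\TTT)}=Y$. For the size, the vertices newly introduced by the second set in $\proj_{\text{\ref{clm:VXY}}}$ can be charged to pairwise-disjoint nonempty families of seeds in $S\cap(A\uplus B)$, so $|\proj_{\text{\ref{clm:VXY}}}(S)| \leq |S|$ and hence $\|\scrR\| \leq \|\scrS\|$. For the single-vertex-change condition, a case analysis on the vertex $w$ in which $S^{(\ttt-1)}$ and $S^{(\ttt)}$ differ shows the projections differ in at most one vertex: if $w\in X\uplus Y$ the change passes through verbatim; if $w\in V$ the projection changes by at most $\{w\}$; and if $w = a_{v,j}$ or $w=b_{v,j}$ then only the membership of $v$ in the second set can be affected, again a change of at most $\{v\}$. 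Collapsing any repeated consecutive sets then yields a genuine reconfiguration sequence.

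The main work, and the step I expect to be the real obstacle, is showing that $\proj_{\text{\ref{clm:VXY}}}(S)$ is a target set whenever $S$ is. By the final observation in the proof of \cref{obs:target}, it suffices to prove $V \subseteq \calA(\proj_{\text{\ref{clm:VXY}}}(S))$, since activating all of $G$ activates all of $H$. Writing $T \triangleq \proj_{\text{\ref{clm:VXY}}}(S)$, I would establish the domination statement $\calA(S)\cap V \subseteq \calA(T)$ by induction on the activation step of the process started at $S$, carrying the joint invariant that (i) every vertex of $(V\uplus X\uplus Y)\cap\calA^{(i)}(S)$ lies in $\calA(T)$, and (ii) whenever $a_{v,j}$ or $b_{v,j}$ lies in $\calA^{(i)}(S)$, the vertex $v$ lies in $\calA(T)$. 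The inductive step splits by the type of the newly activated vertex: a vertex $v\in V$ that crosses its threshold in $S$ does so either with help from some active $a_{v,j}$ or $b_{v,j}$, whence $v\in\calA(T)$ directly by (ii), or purely from active $G$-neighbors, which lie in $\calA(T)$ by (i); a newly active $x_i$ or $y_i$ requires all of $V$ to already be active (it needs $n=|V|$ signals, all from $V$), so $V\subseteq\calA(T)$ by (i); and a newly active $a_{v,j}$ or $b_{v,j}$ requires all of $X$ (resp.\ $Y$) to be active, which by (i) activates all of $A$ (resp.\ $B$) in $T$ and then, using $\tau(v)\leq d_v$, all of $V$ in $T$, giving (ii). Letting $i\to\infty$ yields $V=\calA(S)\cap V\subseteq\calA(T)$, as desired. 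The delicate points are keeping the invariant strong enough to absorb the $A$/$B$ contributions to $V$ (which is precisely why clause (ii) is needed, since a plain statement about $V\uplus X\uplus Y$ would not close) and the implicit use of $\tau(v)\leq d_v$, which may be assumed without loss of generality and is already relied upon in \cref{obs:target}.
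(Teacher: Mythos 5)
Your projection map is exactly the one the paper uses, and the surrounding bookkeeping (preservation of the endpoints, the size bound via disjoint charging, the single-vertex-change case analysis) matches the paper's. Where you genuinely diverge is in the one nontrivial step, namely that $\proj_{\text{\ref{clm:VXY}}}(S)$ remains a target set. The paper passes to the unique inclusion-wise maximal preimage $\proj^{-1}_{\text{\ref{clm:VXY}}}(R)\supseteq S^{(\ttt)}$ (a target set because it contains one) and proves the contrapositive ``if $R$ is not a target set, then neither is $\proj^{-1}_{\text{\ref{clm:VXY}}}(R)$'' by tracing the stalled activation process over $H[V\uplus X\uplus A]$ and $H[V\uplus Y\uplus B]$, essentially anticipating \cref{clm:VXY-target}. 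You instead run a forward simulation: an induction over the activation steps started at $S$, with the two-part invariant that activated vertices of $V\uplus X\uplus Y$ lie in $\calA(T)$ and that an activated $a_{v,j}$ or $b_{v,j}$ forces $v\in\calA(T)$, concluding $V\subseteq\calA(T)$ and hence $\calA(T)=V(H)$ by \cref{obs:target}. Both arguments are sound; your clause (ii) is precisely the strengthening needed to absorb the $A/B$ contributions to vertices of $V$, and your case split for a newly activated $v\in V$ is exhaustive because the outgoing gadgets of a vertex cannot fire before the vertex itself does. The paper's route has the advantage of reusing the same ``stalled process'' analysis it needs anyway for \cref{clm:VXY-target}, while yours is more self-contained and avoids the maximal-preimage detour. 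One cosmetic inaccuracy in your motivation: a single $a_{v,j}$ activates $v$ but activates all of $G$ only when $\{v\}$ happens to be a target set for $G$; the point that one must aggregate rather than delete the $A\uplus B$ vertices stands regardless.
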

\begin{proof}%[Proof of \cref{clm:VXY}]
By \cref{clm:VXYAB},
let $\scrS = ( S^{(1)}, \ldots, S^{(\TTT)} )$
be a reconfiguration sequence from $X$ to $Y$
consisting only of subsets of
$V \uplus X \uplus Y \uplus A \uplus B$
such that $\|\scrS\| = \opt_H(X \reco Y)$.
Consider then a new sequence
$\scrR = ( R^{(1)}, \ldots, R^{(\TTT)} )$ such that
each $R^{(\ttt)}$ is obtained from $S^{(\ttt)}$ by ``projecting''
each vertex of $(A \uplus B) \cap S^{(\ttt)}$
onto its destination in $V$ (specified by a one-way gadget leaving from the vertex).
Formally,
for a vertex set $S \subseteq V \uplus X \uplus Y \uplus A \uplus B$,
we define $\proj_{\text{\ref{clm:VXY}}}(S)$ as
\begin{align}
    \proj_{\text{\ref{clm:VXY}}}(S) \triangleq
    \bigl( (V \uplus X \uplus Y) \cap S \bigr)
    \cup
    \bigl\{
        v \in V \bigm| \exists j \text{ s.t. } a_{v,j} \in S \text{ or } b_{v,j} \in S
    \bigr\}.
\end{align}
Equivalently,
$\proj_{\text{\ref{clm:VXY}}}(S)$ is obtained from $S$ as follows:
(1) if $a_{v,j}$ belongs to $S$, then remove $a_{v,j}$ and add $v$;
(2) if $b_{v,j}$ belongs to $S$, then remove $b_{v,j}$ and add $v$.
Note that $\proj_{\text{\ref{clm:VXY}}}(S) \subseteq V \uplus X \uplus Y$.
Note also that if $S$ and $S'$ differ in at most one vertex,
then $\proj_{\text{\ref{clm:VXY}}}(S)$ and $\proj_{\text{\ref{clm:VXY}}}(S')$ differ in at most one vertex,
which follows from the \emph{projection} property of $\proj_{\text{\ref{clm:VXY}}}$; namely,
$\proj_{\text{\ref{clm:VXY}}}(S) = \bigcup_{v \in S} \proj_{\text{\ref{clm:VXY}}}(\{v\})$.
Let
$R^{(\ttt)} \triangleq \proj_{\text{\ref{clm:VXY}}}(S^{(\ttt)})$
for all $1 \leq \ttt \leq \TTT$.
In particular, it holds that
$R^{(1)} = \proj_{\text{\ref{clm:VXY}}}(X) = X$ and
$R^{(\ttt)} = \proj_{\text{\ref{clm:VXY}}}(Y) = Y$.
Since $|R^{(\ttt)}| \leq |S^{(\ttt)}|$,
$R^{(\ttt-1)}$ and $R^{(\ttt)}$ differ in at most one vertex for all $2 \leq \ttt \leq \TTT$, and
$\|\scrR\| \leq \|\scrS\| = \opt_H(X \reco Y)$,
what remains to be seen is that every $R^{(\ttt)}$ is a target set.

For a vertex set $R \subseteq V \uplus X \uplus Y$,
we define $\proj^{-1}_{\text{\ref{clm:VXY}}}(R) \subseteq V \uplus X \uplus Y \uplus A \uplus B$ as
\begin{align}
    \proj^{-1}_{\text{\ref{clm:VXY}}}(R)
    \triangleq R
    \uplus \bigl\{ a_{v,j} \in A \bigm| v \in R \text{ and } 1 \leq j \leq d_v \bigr\}
    \uplus \bigl\{ b_{v,j} \in B \bigm| v \in R \text{ and } 1 \leq j \leq d_v \bigr\}.
\end{align}
Observe that
$\proj^{-1}_{\text{\ref{clm:VXY}}}(R)$ is
the unique inclusion-wise maximal vertex set $S$
such that $\proj_{\text{\ref{clm:VXY}}}(S) = R$.
Thus, it suffices to show that
``if $\proj^{-1}_{\text{\ref{clm:VXY}}}(R)$ is a target set, then so is $R$.''
Suppose (for contraposition) that $R$ is not a target set.
By \cref{obs:target},
$R$ (and thus $\proj^{-1}_{\text{\ref{clm:VXY}}}(R)$) does not contain $X$, $Y$, or any target set for $G$ (entirely);
moreover, $\proj^{-1}_{\text{\ref{clm:VXY}}}(R)$ does not contain $A$ or $B$ (entirely).
We will show that
$\proj^{-1}_{\text{\ref{clm:VXY}}}(R)$ does not activate any
$x_i \in X \setminus \proj^{-1}_{\text{\ref{clm:VXY}}}(R)$
or
$a_{v,j} \in A \setminus \proj^{-1}_{\text{\ref{clm:VXY}}}(R)$.

Consider starting the activation process from $\proj^{-1}_{\text{\ref{clm:VXY}}}(R)$ over
the subgraph of $H$ induced by $V \uplus X \uplus A$ and the one-way gadgets connecting vertices of $V \uplus X \uplus A$.
If the vertices of $\proj^{-1}_{\text{\ref{clm:VXY}}}(R) \cap V$ are initially activated,
then at least one vertex of $V$ would not have been activated
(because $\proj^{-1}_{\text{\ref{clm:VXY}}}(R) \cap V = R \cap V$ is not a target set for $G$).
Since no internal vertices of one-way gadgets are included in
$\proj^{-1}_{\text{\ref{clm:VXY}}}(R)$,
each vertex $x_i$ of $X$ is incident to at most $n-1$ active vertices
(i.e., heads of one-way gadgets from $V$ to $x_i$).
Every vertex $x_i$ of $\proj^{-1}_{\text{\ref{clm:VXY}}}(R) \setminus X$
thus would not get newly activated.
Subsequently,
each vertex $a_{v,j}$ of $A$ is incident to at most $\ell-1$ active vertices
(i.e., heads of one-way gadgets from $X$ to $a_{v,j}$); thus,
every vertex $a_{v,j}$ of $\proj^{-1}_{\text{\ref{clm:VXY}}}(R) \setminus A$ would not get newly activated.
Observe that
for any vertex $v$ of $\proj^{-1}_{\text{\ref{clm:VXY}}}(R) \setminus V$ that has not been activated so far,
we have
$a_{v,j} \notin \proj^{-1}_{\text{\ref{clm:VXY}}}(R)$ for all $1 \leq j \leq d_v$;
as a result, $v$ would not become activated along the one-way gadgets from $a_{v,j}$'s to $v$.
Consequently, $\proj^{-1}_{\text{\ref{clm:VXY}}}(R)$ cannot activate such $v$
during the activation process over
the subgraph of $H$ induced by $V \uplus X \uplus A$ and the one-way gadgets connecting vertices of $V \uplus X \uplus A$.

Similarly,
$\proj^{-1}_{\text{\ref{clm:VXY}}}(R)$ cannot activate any
$y_i \in Y \setminus \proj^{-1}_{\text{\ref{clm:VXY}}}(R)$ or 
$b_{v,j} \in B \setminus \proj^{-1}_{\text{\ref{clm:VXY}}}(R)$
during the activation process over
the subgraph of $H$ induced by $V \uplus Y \uplus B$ and the one-way gadgets connecting vertices of $V \uplus Y \uplus B$,
completing the proof.
\end{proof}

We finally characterize the condition that
a subset of $V \uplus X \uplus Y$ is a target set for $H$.
\begin{claim}
\label{clm:VXY-target}
    For any vertex set $S \subset V \uplus X \uplus Y$,
    $S$ is a target set for $H$ 
    if and only if either
    $X \subseteq S$, $Y \subseteq S$, or $S \cap V$ is a target set for $G$.
\end{claim}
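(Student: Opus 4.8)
The plan is to prove the two implications separately: the backward (``if'') direction is a short consequence of monotonicity of the activation closure together with \cref{obs:target}, while the forward (``only if'') direction is where the gadget arithmetic does the real work, and which I would establish by contraposition.

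For the ``if'' direction I would use that the closure operator $\calA$ is monotone, so it suffices to check that each of the three hypotheses forces $\calA(S) = V(H)$. If $X \subseteq S$, then $\calA(S) \supseteq \calA(X) = V(H)$, since $X$ is a target set for $H$ by \cref{obs:target}; the case $Y \subseteq S$ is identical. If $S \cap V$ is a target set for $G$, then by \cref{obs:target} it is also a target set for $H$, and monotonicity applied to $S \cap V \subseteq S$ again yields $\calA(S) = V(H)$.

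For the ``only if'' direction I would argue the contrapositive: assuming $X \not\subseteq S$, $Y \not\subseteq S$, and $S \cap V$ is not a target set for $G$, I would exhibit a closed set $F$ with $S \subseteq F \subsetneq V(H)$, which forces $\calA(S) \subseteq F \subsetneq V(H)$ and hence shows $S$ is not a target set. The candidate $F$ is the ``first-phase'' configuration: on $V$ it is the closure $V_0$ of $S \cap V$ under the activation process of $G$ (the point being that, with no vertex of $A \uplus B$ active, a vertex $v \in V$ gains active neighbors only through the original edges of $G$, because the heads of the incoming gadgets $D_{a_{v,j},v}$ and $D_{b_{v,j},v}$ remain dormant); on $X$ and $Y$ it equals $S \cap X$ and $S \cap Y$; it contains no vertex of $A \uplus B$; and it contains exactly the internal gadget vertices whose tail lies in $F$. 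Since $S \cap V$ is not a $G$-target set we have $V_0 \subsetneq V$, so $F \neq V(H)$, while $S \subseteq F$ is immediate.

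The heart of the proof, and the step I expect to be the main obstacle, is verifying that $F$ is genuinely closed under the activation process of $H$; this is precisely where the thresholds are tuned to create a circular dependency that cannot be bootstrapped. I would check, vertex class by vertex class, that nothing new ignites. A vertex $x_i \in X \setminus S$ has threshold $\tau'(x_i) = n$ and exactly $n$ incoming one-way gadgets, one from each $v \in V$, so it can activate only after all of $V$ is active, which is impossible while $V_0 \subsetneq V$; symmetrically for $y_i \in Y \setminus S$. A vertex $a_{v,j} \in A$ has threshold $\ell$ and exactly $\ell$ incoming gadgets, one from each $x_i \in X$, so it needs all of $X$ active, impossible while only $S \cap X \subsetneq X$ is active, and symmetrically for $B$. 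A vertex $v \in V \setminus V_0$ gains no new active neighbor, because $V_0$ is $G$-closed and the gadget heads feeding $v$ from $A \uplus B$ are inactive. The remaining checks are on the internal gadget vertices, handled by the defining one-way property that a gadget transmits activation only from tail to head (an active head alone leaves $h$ with a single active neighbor, below its threshold $2$), so no internal vertex of a tail-inactive gadget can fire. Once $F$ is seen to be closed, the contrapositive follows.
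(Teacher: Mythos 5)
Your proof is correct and follows essentially the same route as the paper's: the ``if'' direction via \cref{obs:target} and monotonicity, and the ``only-if'' direction by contraposition, augmenting $S$ with the $G$-closure of $S \cap V$ and checking class by class that the thresholds $n$ on $X \uplus Y$ and $\ell$ on $A \uplus B$, together with the one-way property of the gadgets, prevent any further activation. Your explicitly closed set $F$ (which also names the internal gadget vertices with active tails) is just a slightly more formal packaging of the paper's argument that, starting from $\tilde{S} = S \cup \calA(S)$, nothing outside the one-way gadgets can be newly activated.
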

\begin{proof}%[Proof of \cref{clm:VXY-target}]
Since the ``if'' direction follows from \cref{obs:target},
we prove (the contraposition of) the ``only-if'' direction.
Let
$S \subset V \uplus X \uplus Y$
be any vertex set that satisfies
$X \not\subseteq S$, 
$Y \not\subseteq S$, and
$S \cap V$ is not a target set for $G$.
Let $\calA(S) \subset V$ be
the set of vertices that would have been activated during the activation process starting from $S \cap V$ \emph{over} $G$,
and define $\tilde{S} \triangleq S \cup \calA(S)$.
Since $\tilde{S} \supseteq S$,
it suffices to show that $\tilde{S}$ is not a target set for $H$.

Consider starting the activation process from $\tilde{S}$ over $H$.
We show that no vertices \emph{excepting} those of one-way gadgets can be \emph{newly} activated.
By definition of $\tilde{S}$,
all vertices of $V \setminus \tilde{S}$ would not have become activated by initially activating
$\tilde{S}\cap V = \calA(S)$.
Hence, for each $x_i \in X$,
at most $n-1$ one-way gadgets from $V$ to $x_i$ may be activated,
implying that any vertex $x_i$ of $X \setminus \tilde{S}$ cannot be newly activated.
Plus, for each $a_{v,j} \in A$,
at most $\ell-1$ one-way gadgets from $X$ to $a_{v,j}$ may be activated;
namely, no vertex $a_{v,j}$ of $A \setminus \tilde{S}$ may become activated.
Similarly,
any vertex $y_i$ of $Y \setminus \tilde{S}$ cannot be newly activated, and
no vertex $b_{v,j}$ of $B\setminus \tilde{S}$ may become activated.
Consequently,
(the internal vertices of) all one-way gadgets
connecting from $A$ or $B$ to $V$ remain inactive;
thus, any inactive vertex of $V \setminus \tilde{S}$ would also remain inactive,
completing the proof.
\end{proof}

We are now ready to prove \cref{lem:soundness}.

\begin{proof}[Proof of \cref{lem:soundness}]
Suppose that $\opt_H(X \reco Y) \leq k_s+\ell$.
By \cref{clm:VXY},
there is a reconfiguration sequence from $X$ to $Y$,
denoted by $\scrS = ( S^{(1)}, \ldots, S^{(\TTT)} )$,
consisting only of subsets of $V \uplus X \uplus Y$ such that
$\|\scrS\| = \opt_H(X \reco Y)$.
Note that $\scrS$ \emph{cannot} contain $X \uplus Y$ at any time
because $|X \uplus Y| \geq 2 \ell > k_s+\ell$.
Each target set $S^{(\ttt)}$ in $\scrS$ can thus be classified into the following three types:
\begin{enumerate}[label=\textup{(C\arabic*)}]
    \item $S^{(\ttt)}$ contains $X$ but not $Y$;
    \label{clm:sound:1}
    \item $S^{(\ttt)}$ contains $Y$ but not $X$;
    \label{clm:sound:2}
    \item $S^{(\ttt)}$ contains neither $X$ nor $Y$.
    \label{clm:sound:3}
\end{enumerate}
Obviously,
$S^{(1)} = X$ fits into \ref{clm:sound:1} and
$S^{(\TTT)} = Y$ fits into \ref{clm:sound:2}.
If two target sets fit respectively into \ref{clm:sound:1} and \ref{clm:sound:2},
then their symmetric difference has size at least \emph{two}.
Thus, such a pair of target sets would not be adjacent within $\scrS$; namely,
there must be a target set in $\scrS$ that fits into \ref{clm:sound:3}, denoted by $S^\circ$.

Since $S^\circ \cap V$ must be a target set for $G$ by \cref{clm:VXY-target},
in the subsequence of $\scrS$ from $S^{(1)}$ to $S^\circ$,
there must be some $S^\star \subseteq V \uplus X \uplus Y$ that contains \emph{both} $X$ and any target set for $G$.
Such $S^\star$ can be decomposed into $(S^\star \cap V) \uplus X \uplus Y'$, where
$(S^\star \cap V)$ is a target set for $G$ and $Y' \subseteq Y$,
implying that
\begin{align}
    |S^\star|
    = |(S^\star \cap V) \uplus X \uplus Y'|
    \geq |S^\star \cap V| + |X|
    = |S^\star \cap V| + \ell.
\end{align}
On the other hand, since $\|\scrS\| \leq k_s+\ell$ by assumption, we have
\begin{align}
    |S^\star| \leq k_s+\ell.
\end{align}
Consequently,
$(S^\star \cap V)$ turns out to be a target set for $G$ of size at most $k_s$,
accomplishing the proof.
\end{proof}

\section{Conclusion}
In this paper, we demonstrated that the approximation threshold of \prb{Minmax Target Set Reconfiguration} is $2$.
An immediate open question is
whether \prb{Minmax Target Set Reconfiguration} is \PSPACE-hard to approximate within a factor better than $2$.

\paragraph{Acknowledgments.}
I wish to thank the anonymous referees for their careful reading.

\begin{sloppypar}
\printbibliography %[heading=bibintoc]
\end{sloppypar}

\end{document}